\documentclass[sigconf]{acmart}

\AtBeginDocument{%
  \providecommand\BibTeX{{%
    \normalfont B\kern-0.5em{\scshape i\kern-0.25em b}\kern-0.8em\TeX}}}

\graphicspath{{./images/}} 
\usepackage{algorithm}
\usepackage{algorithmic}
\usepackage{orcidlink}
\usepackage{hyperref}
\DeclareMathOperator{\polylog}{polylog}

\acmConference{}{}{}
\acmBooktitle{}
\acmISBN{}
\acmDOI{}
\acmYear{}

\renewcommand\footnotetextcopyrightpermission[1]{}
\setcopyright{none}

\begin{document}

%%
%% The "title" command has an optional parameter,
%% allowing the author to define a "short title" to be used in page headers.
%\title{Improved Analysis of Leader Election in Diameter-Two Networks}

\title{Toward Optimality: A Tighter Analysis of Message Complexity for Leader Election in Diameter-Two Networks}

% \author{Adri Bhattacharya}{Indian Institute of Technology Guwahati, Assam-781039, India}{a.bhattacharya@iitg.ac.in}{https://orcid.org/0000-0003-1517-8779}{}

\author{Abhijit Sadhukhan}
\email{abhijits3112@gmail.com}
\orcid{0009-0000-2947-4048} % Optional, but recommended by ACM
\affiliation{%
  \institution{Indian Statistical Institute}
  %\streetaddress{123 Research Blvd}
  \city{Kolkata}
  \country{India}
}

\author{Adri Bhattacharya }
\email{adri.bhattacharya@gmail.com}
\orcid{0000-0003-1517-8779} % Optional, but recommended by ACM
\affiliation{%
  \institution{Indian Statistical Institute}
  %\streetaddress{123 Research Blvd}
  \city{Kolkata}
  \country{India}
}

\author{Anisur Rahaman Molla}
\email{molla@isical.ac.in}
\orcid{0000-0002-1537-3462} % Optional, but recommended by ACM
\affiliation{%
  \institution{Indian Statistical Institute}
  %\streetaddress{123 Research Blvd}
  \city{Kolkata}
  \country{India}
}

\begin{abstract}
 % We study the message complexity of implicit leader election in synchronous diameter-two networks. Our main contribution is a tighter analysis of the randomized protocol proposed by Chatterjee et al. [\emph{Distributed Computing, 2020}]. In their work, they established a lower bound of $\Omega(n)$ messages and presented an $O(n \log^3 n)$ message-bound randomized algorithm for leader election in diameter two networks. The algorithm runs in $O(1)$ rounds and succeeds with high probability. In this paper, we significantly reduce the message complexity to $O(n\log n)$, by providing a refined analysis of their randomized algorithm. 

We study the message complexity of leader election in synchronous networks of diameter two. Our main contribution is a refined analysis of the randomized algorithm proposed by Chatterjee et al. [DC, 2020]. In their work, the authors established a lower bound of $\Omega(n)$ messages ($n$ is the number of nodes in the network) and presented a randomized algorithm that elects a leader in ${O}(1)$ rounds using $O(n \log^3 n)$ messages with high probability. 

In this paper, we improve their $\polylog n$ gap in the message bound by providing a tighter analysis of their algorithm, reducing the message complexity to $O(n\log n)$, while preserving the $O(1)$-round complexity and high-probability correctness guarantee.   
  %[\emph{Distributed Computing, 2020}]
   % We study the message complexity of implicit leader election in synchronous diameter-two networks. Our main contribution is a tighter analysis of the randomized protocol proposed by Chatterjee et al. [\emph{Distributed Computing, 2020}] . In their work, they established a lower bound of $\Omega(n)$ messages and presented a randomized Monte Carlo algorithm that runs in $O(1)$ rounds and succeeds with high probability. They showed that the algorithm requires $O(n \log^3 n)$ messages with high probability. We significantly reduce the message complexity to $O(n\log n)$, by providing a refined analysis of their randomized algorithm. 
\end{abstract}

%%
%% The code below is generated by the tool at: http://dl.acm.org/ccs.cfm
%% Please copy and paste the code instead of the example below.
%%
% \begin{CCSXML}
% <ccs2012>
%  <concept>
%   <concept_id>00000000.0000000.0000000</concept_id>
%   <concept_desc>Do Not Use This Code, Generate the Correct Terms for Your Paper</concept_desc>
%   <concept_significance>500</concept_significance>
%  </concept>
%  <concept>
%   <concept_id>00000000.00000000.00000000</concept_id>
%   <concept_desc>Do Not Use This Code, Generate the Correct Terms for Your Paper</concept_desc>
%   <concept_significance>300</concept_significance>
%  </concept>
%  <concept>
%   <concept_id>00000000.00000000.00000000</concept_id>
%   <concept_desc>Do Not Use This Code, Generate the Correct Terms for Your Paper</concept_desc>
%   <concept_significance>100</concept_significance>
%  </concept>
%  <concept>
%   <concept_id>00000000.00000000.00000000</concept_id>
%   <concept_desc>Do Not Use This Code, Generate the Correct Terms for Your Paper</concept_desc>
%   <concept_significance>100</concept_significance>
%  </concept>
% </ccs2012>
% \end{CCSXML}

%\ccsdesc[500]{Theory of Computation~Distributed Algorithms}
% \ccsdesc[300]{Do Not Use This Code~Generate the Correct Terms for Your Paper}
% \ccsdesc{Do Not Use This Code~Generate the Correct Terms for Your Paper}
% \ccsdesc[100]{Do Not Use This Code~Generate the Correct Terms for Your Paper}

%%
%% Keywords. The author(s) should pick words that accurately describe
%% the work being presented. Separate the keywords with commas.
\keywords{distributed algorithm, leader election, randomized algorithm, message complexity, diameter-two networks}

%% The following are not a requirement, delete if not using
% \received{20 February 2024}  %% inital submission date
% \received[revised]{12 March 2024} %% interim new draft
% \received[accepted]{5 June 2024}  %% publication version

%%
%% This command processes the author and affiliation and title
%% information and builds the first part of the formatted document.
\maketitle

\section{Introduction and Related Works}\label{sec:intro}
Leader election is a fundamental problem in distributed computing. In this problem, nodes in the network collectively attempt to elect a single node as their leader. Leader election serves as a building block for many other distributed tasks and has been extensively studied across a wide variety of network topologies \cite{humblet1984electing4, korach1990modular6, peleg1990time17, santoro2006design19, tel2000introduction20}. In the implicit version of leader election \cite{lynch1996distributed14}, the elected node knows that it is the leader, while all other nodes know that they are not elected. The problem was initially studied mainly in the deterministic setting \cite{afek1991time1, santoro2006design19, tel2000introduction20}. Recent work has explored randomized solutions and established several strong results in the randomized setting.  Kutten et al. \cite{kutten2015sublinear12} presented a near-optimal sublinear algorithm for leader election in complete graphs (i.e., diameter one), which runs in $O(1)$ round and uses $O(\sqrt n \log ^{3/2} n)$ messages with high probability and also showed $\Omega(\sqrt{n})$ message lower bound. In general graphs with diameter three or more, Kutten et al. \cite{kutten2015complexity11} showed a lower bound of $\Omega(m)$ messages for randomized leader election and also developed an algorithm which uses $O(m\log\log n)$ messages with high probability. Gilbert et al. \cite{gilbert2018leader2} also studied randomized algorithms for implicit leader election in well-connected general graphs and analyzed complexity bounds in terms of mixing time. For additional details and comparisons, we refer the reader to \cite{kutten2015sublinear12,kutten2015complexity11}. 

% Later, Gilbert et al. \cite{gilbert2018leader2} studied randomized algorithms for implicit leader election in well-connected general graphs. In \cite{kutten2015complexity11}, Kutten et al. proved a lower bound of $\Omega(m)$ messages for leader election in general graphs with diameter at least three, where $m$ is the total number of edges in the underlying graph. Subsequently, Kutten et al.  \cite{kutten2015sublinear12} presented a near-optimal sublinear algorithm for leader election in complete graphs. 

For diameter-two networks, Chatterjee et al. \cite{chatterjee2020complexity} established a lower bound of $\Omega(n)$ messages ($n$ is the number of nodes in the network) together with a simple near-optimal Monte Carlo algorithm, completing the understanding of the problem up to polylog gaps between upper and lower bounds. 
Their proposed Monte Carlo algorithm that works in $O(1)$ rounds and uses $O(n \log^3 n)$ messages with high probability\footnote{i.e., with probability at least $1 - \frac{1}{n^c}$ for some constant $c\geq 1$.} and does not assume knowledge of the network size $n$ or any other global parameter knowledge. 

% Our primary goal is to design leader election algorithms that are efficient in terms of both time and communication (message) complexity. In \cite{chatterjee2020complexity}, the authors proved a lower bound of $\Omega(n)$ on the message complexity for diameter-two networks and proposed a Monte Carlo algorithm that works in $O(1)$ rounds and uses $O(n \log^3 n)$ messages with high probability (i.e., with probability at least $1 - \frac{1}{n^c}$ for some constant $c$). Their algorithm does not assume knowledge of the network size $n$, and therefore does not rely on any global information.

\subsection*{Our Contributions:}
%\textbf{Need more refined version and add some line on what types of different cases we study, in terms of which we break the cases etc.}
In this work, we provide a tighter analysis of the randomized algorithm developed by Chatterjee et al.  \cite{chatterjee2020complexity}. For the sake of completeness, we include the pseudocode in Algorithm~\ref{alg:degree_based_leader_election}. Our analysis improves the message complexity bound from $O(n \log^3 n)$ to $O(n \log n)$ with high probability. This significant improvement nearly closes the gap between the upper bound and the lower bound of $\Omega (n)$. Our analysis follows the same high-level framework as the original work, including the use of degree-based bucketing and concentration bounds via the Chernoff inequality. 

Our improvement comes from a more refined case analysis based on the expected contribution of each degree bucket to the total message complexity. Instead of applying loose upper bounds at intermediate steps, we delay aggregation and bound the relevant sums only at the end. This finer control over intermediate terms allows us to obtain a significantly tighter overall concentration bound.

\section{Model and Problem Definition}
The underlying synchronous congest model follows that of Chatterjee et al.~\cite{chatterjee2020complexity} and has also been used in 
\cite{lynch1996distributed14, afek1991time1, humblet1984electing4, korach1990modular6, korach1985optimality8, korach1989optimal9}. 
The communication network is a connected undirected graph $G=(V,E)$ with $|V|=n$ and $|E|=m$. Each node (processor) in $G$ executes an instance of a distributed algorithm in synchronous rounds. 
In every round, a node may send messages to its neighbors, receive messages sent in the same round, and perform arbitrary local computation. 
Messages are of size $O(\log n)$ bits and constitute the only means of communication; nodes do not share memory. 
Each node has a unique identifier of size $O(\log n)$ bits. 
All nodes are initially awake and start executing the algorithm simultaneously. 
This is the \emph{KT0 model}~\cite{peleg1990time17}, where nodes initially know only their own identities. 

In this paper, we focus on graphs with diameter $D(G)=2$. 
Since $G$ has diameter two, we have $n-1 \le m < \frac{n(n-1)}{2}$. A formal definition of the leader election problem is given below.
\begin{definition}[Leader Election]\label{def:le}
Each node $u$ maintains a variable $\mathit{status}_u$ taking values from $\{\perp, \textsc{ELECTED}, \textsc{NON-ELECTED}\}$
with $\mathit{status}_u = ~\perp$ initially. 
An algorithm $\mathcal{A}$ solves the leader election problem in $T$ rounds if, from round $T$ onward, exactly one node sets its status to \textsc{ELECTED} and every other node sets its status to \textsc{NON-ELECTED}. 
\end{definition}
This is the standard implicit version of the leader election problem. In the explicit version, we additionally require that every non-leader node knows the identity of the leader node. In this work we discuss on implicit leader election in diameter two network.

%  *************
% The communication network is represented by an undirected connected graph 
% $G = (V,E)$ with $|V| = n$ with  diameter two. 

% Computation proceeds in synchronous rounds. In each round, every node performs 
% some local computation, receives and sends (possibly different) messages to its neighbors. We assume that each 
% message has size of at most $O(\log n)$ bits.

% The time complexity is measured in the number of rounds, and the message 
% complexity is the total number of messages exchanged during the execution. 
% An event holds with high probability (w.h.p.) if it occurs with probability 
% at least $1 - 1/n^c$ for some constant $c > 0$.

%\textbf{Need to write}

%\paragraph{Related work}  1. Briefly state the contain of our source paper. 2. state the result available for this problem in other networks. (citations will be here mostly)

%\paragraph{Our contribution} Clearly state what is precisely our improvement complexity wise. 

%\section{Previous work}  1. Clearly state the algorithm as it was.  2. State the main theorems of their complexity bounds and explain the key points so that out contribution became easily understandable. (In this part we state correctness theorems from the previous paper, we do not repeat this correctness analysis) 3. Along with those explanation briefly add our contribution there. We may add a paragraph on our contribution (briefly explain)

 \section{Message Complexity Analysis}
%here we concretely do the analysis of concentration bounds.

%We mention briefly that we already stated the correctness and obvious statement for constant time.

%we only do the full message complexity analysis.

%We have already mentioned that we don't give a new algorithm, we tightening the analysis. We also mentioned that the algorithm runs in $O(1)$ rounds and succeed with high probability (\textbf{Mention the corresponding theorem}). In this section we give the full analysis of message complexity and will show that the algorithm uses $O(n \log n)$ messages with high probability.To make things easier to understand we use the same notation as of the previous paper. Let $X$ be a random variable denotes the total number of candidate selected and $X_v$ be a indicator random variable taje value $1$ if $v$ became a candidate. Again let $M^{entire}$ be another random variable which denote the total number of message in he the algorithm and $M_v$ be the number of messages sent and received by node $v$.

\begin{algorithm}[t]
\caption{Randomized Leader Election~\cite{chatterjee2020complexity}}
\label{alg:degree_based_leader_election}
\begin{algorithmic}[1]

\STATE Each node $v \in V$ independently becomes a candidate with probability$\frac{1 + \log d_v}{d_v},$
where $d_v$ is the degree of node $v$.

\IF{$v$ becomes a candidate}
    \STATE $v$ sends its ID to all its neighbors.
\ENDIF

\STATE Each node acts as a referee for all candidate neighbors (including itself, if applicable).

\STATE Upon receiving IDs from candidate neighbors $v_1, v_2, \dots, v_j$, a node $w$ computes $\min \{ \text{ID}(v_1), \text{ID}(v_2), \dots, \text{ID}(v_j) \}.$

\STATE Node $w$ sends this minimum ID back to each of $v_1, v_2, \dots, v_j$.

\STATE Each candidate node $v$ declares itself the leader if and only if it receives its own ID from all its neighbors.

\STATE Otherwise, $v$ declares itself a non-leader.

\end{algorithmic}
\end{algorithm}

We present a tighter analysis of the randomized Monte Carlo leader election algorithm (cf. Algorithm \ref{alg:degree_based_leader_election}). The algorithm runs in $O(1)$ rounds and
succeeds with high probability (cf. Lemma~3 in ~\cite{chatterjee2020complexity}). In this section, we prove the following theorem.

\begin{theorem}\label{main:theorem}
    Algorithm \ref{alg:degree_based_leader_election} succeeds to elect a leader in $O(1)$ rounds with high probability, while sending $O(n\log n)$ messages. 
\end{theorem}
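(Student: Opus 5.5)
The plan is to take correctness and the $O(1)$ round bound directly from Chatterjee et al.\ (Lemma~3 in~\cite{chatterjee2020complexity}), since Algorithm~\ref{alg:degree_based_leader_election} has a constant number of communication steps, and to spend all the effort on the message bound. The first step is to write the total message count as a sum over candidates. Let $X_v$ be the indicator that $v$ becomes a candidate. A candidate $v$ sends its ID to its $d_v$ neighbors, and each neighbor sends exactly one reply back to $v$. All other nodes send nothing. Hence the total number of messages is exactly
\[
M \;=\; \sum_{v\in V} 2 d_v X_v .
\]
Its expectation is $\mathbb{E}[M] = \sum_v 2(1+\log d_v) \le 2n(1+\log n)$, so the only real issue is concentration. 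A direct Chernoff or Hoeffding bound on $M$ is too weak, because a single summand can be as large as $2(n-1)$. This is the source of the extra $\log^2 n$ factors in the original analysis.

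To obtain concentration, I will bucket nodes by degree. For $i = 0,1,\dots,\lceil \log n\rceil$, let $B_i = \{v : 2^i \le d_v < 2^{i+1}\}$, let $n_i=|B_i|$, and let $Y_i = \sum_{v\in B_i} X_v$ be the number of candidates in bucket $i$. Then
\[
M \le \sum_i 2^{i+2} Y_i .
\]
Within a bucket, $Y_i$ is a sum of independent Bernoulli variables with mean
\[
\mu_i = \sum_{v\in B_i}\frac{1+\log d_v}{d_v} \le \frac{n_i(i+2)}{2^i}.
\]
The key point is to bound each $Y_i$ by an \emph{additive} threshold rather than a multiplicative one. I will use the standard Chernoff tail $\Pr[Y \ge t] \le 2^{-t}$ for $t \ge 6\,\mathbb{E}[Y]$, applied with $t = \max(6\mu_i,\, c\log n)$. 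This gives $Y_i \le 6\mu_i + c\log n$ with probability at least $1-n^{-c}$. A union bound over the $O(\log n)$ buckets then makes this hold for all buckets simultaneously with high probability.

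On that event, the total splits into two parts:
\[
M \;\le\; \sum_i 2^{i+2}\,(6\mu_i + c\log n)
\;\le\; 24\sum_i n_i (i+2) \;+\; 4c\log n \sum_{i\le \lceil\log n\rceil} 2^{i}.
\]
The first sum is at most $24\,n(\log n + 2)$, because $\sum_i n_i = n$ and $i \le \log n$. The second sum is geometric and equals $O(n)$, so the second term is $O(n\log n)$. Therefore $M = O(n\log n)$ with high probability.

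I expect the main obstacle to be the handling of buckets with small expected candidate counts, that is, high-degree buckets where $\mu_i \ll \log n$. There a multiplicative Chernoff bound gives nothing useful, and a naive treatment would charge a $\log n$ factor per bucket against $n$ messages, which is where earlier analyses lose polylog factors. The remedy described above is to charge the additive $c\log n$ slack to bucket $i$ at the bucket's own message cost $2^{i+2}$, rather than at the worst-case cost $n$. The geometric sum over buckets then absorbs this slack into a single $O(n\log n)$ term. One technical point needs checking: the tail form $\Pr[Y\ge t]\le 2^{-t}$ must be valid when $t$ exceeds the mean by a large factor. This holds for sums of independent Bernoulli variables, including when $\mu_i$ is tiny, using the bound $(e\mu/t)^t$.
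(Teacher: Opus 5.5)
Your proposal is correct and follows essentially the same route as the paper: degree bucketing, the bound $M \le \sum_i 2^{i+2} Y_i$, the Chernoff tail $\Pr[Y_i \ge t] \le 2^{-t}$ for $t \ge 6\mu_i$, and charging the $O(\log n)$ slack of low-expectation buckets at the bucket's own cost, so that it sums geometrically to $O(n\log n)$. Your single threshold $t=\max(6\mu_i, c\log n)$ merges the paper's Cases II and III into one step, and starting the bucket index at $0$ removes the need for the paper's separate degree-one case.
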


We basically provide a
complete analysis of the message complexity and show that Algorithm \ref{alg:degree_based_leader_election} uses
$O(n \log n)$ messages with high probability.

For consistency, we follow the notation of~\cite{chatterjee2020complexity}.
Let $G = (V,E)$ be an undirected graph with $|V| = n$. For each node $v \in V$,
let $d_v$ denote its degree.

\medskip

Each node independently becomes a candidate with probability $p_v = \frac{1+\log d_v}{d_v}.$
Let $X_v$ be the indicator random variable defined as
\[
X_v =
\begin{cases}
1 & \text{if node $v$ becomes a candidate},\\
0 & \text{otherwise.}
\end{cases}
\]

Let $X = \sum\limits_{v \in V} X_v$ denote the total number of candidates selected.

%\medskip

Let $M_v$ denote the total number of messages sent and received by node $v$
during the execution of the algorithm, and let $M^{\mathrm{entire}} = \sum\limits_{v \in V} M_v$ denote the total number of messages used by Algorithm \ref{alg:degree_based_leader_election}.

\subsection*{Expectation of the Total Number of Messages:}

By the structure of the algorithm, if a node $v$ becomes a candidate,
it communicates with all of its neighbors. Thus, $M_v = 2 d_v \cdot X_v.$

Therefore, $M^{\mathrm{entire}} = \sum\limits_{v \in V} 2 d_v X_v.$

Taking expectation and using linearity of expectation,
\[
\mathbb{E}\!\left[M^{\mathrm{entire}}\right]
= \sum_{v \in V} 2 d_v \mathbb{E}[X_v].
\]

Since $X_v$ is an indicator random variable,
\[
\mathbb{E}[X_v] = p_v = \frac{1+\log d_v}{d_v}.
\]

Hence, $\mathbb{E}\!\left[M^{\mathrm{entire}}\right]
= \sum\limits_{v \in V} 2 d_v \cdot \frac{1+\log d_v}{d_v}
= 2 \sum\limits_{v \in V} (1+\log d_v).$

Now, since $d_v \le n$ for every $v \in V$, we have $\log d_v \le \log n$.

Therefore, $\sum\limits_{v \in V} (1+\log d_v)
\le \sum\limits_{v \in V} (1+\log n)
= n(1+\log n).$

Thus, $\mathbb{E}\!\left[M^{\mathrm{entire}}\right]
\le 2n(1+\log n)
= O(n \log n).$

\subsection*{Concentration via Bucketing:}

To obtain a high-probability bound on $M^{\mathrm{entire}}$, we follow the
bucketing approach of~\cite{chatterjee2020complexity}. Since the variables
$M_v$ are not identically distributed (they depend on $d_v$), we group nodes
according to their degrees.

Let, $k$ be an integer such that $2^{k-1} \le n < 2^k$. For each integer $i$ such that, $1 \le i \le k $, define the bucket $V_i = \{ v \in V : 2^{i-1} \le d_v < 2^{i} \}.$ and let $n_i = |V_i|$ denote the number of nodes in bucket $V_i$.

Again assume, $Y_i = \sum_{v \in V_i} X_v$ denote the number of candidates selected from bucket $V_i$.

\medskip

We now bound the expectation of $Y_i$.

\begin{lemma}
\label{bound of Y_i}
For every $i \ge 2$, $\mathbb{E}[Y_i] \le \frac{3 i n_i}{2^i}.$
\end{lemma}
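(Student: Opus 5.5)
Since $Y_i$ is a sum of indicators, linearity of expectation gives $\mathbb{E}[Y_i] = \sum_{v \in V_i} p_v$ with $p_v = \frac{1+\log d_v}{d_v}$. So it suffices to show that for every $v \in V_i$ (that is, $2^{i-1} \le d_v < 2^i$) we have $p_v \le \frac{3i}{2^i}$; summing over the $n_i$ nodes of the bucket then gives the claim.

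To bound $p_v$, I would treat the function $f(x) = \frac{1+\log x}{x}$ on the interval $[2^{i-1}, 2^i)$ and bound numerator and denominator separately, with logarithms taken base $2$ as in the bucketing. For the numerator, $d_v < 2^i$ gives $\log d_v < i$, hence $1 + \log d_v < 1 + i$. For the denominator, $d_v \ge 2^{i-1}$ gives $\frac{1}{d_v} \le \frac{2}{2^i}$. Combining,
\[
p_v \le \frac{2(1+i)}{2^i}.
\]
It remains to note that $2(1+i) \le 3i$ exactly when $i \ge 2$. This is where the hypothesis $i \ge 2$ enters: for $i=1$ the bound would fail, since degree-one nodes have $p_v = 1 > 3/2$.

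The only potential obstacle is the base of the logarithm. If $\log$ were natural, then $\ln d_v < i \ln 2 < i$, so the same bound $1 + \log d_v < 1 + i$ still holds and the argument goes through unchanged. I would also check that $p_v$ is a valid probability (at most $1$, since $1 + \log d \le d$ for $d \ge 1$), but this does not affect the expectation bound. Summing $p_v \le \frac{3i}{2^i}$ over $v \in V_i$ completes the proof.
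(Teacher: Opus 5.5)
Your proof is correct and is essentially the paper's argument: linearity of expectation, then $1+\log d_v \le 1+i$ and $1/d_v \le 1/2^{i-1}$, giving $p_v \le \frac{2(1+i)}{2^i} \le \frac{3i}{2^i}$ for $i \ge 2$. One small slip in your aside: for $i=1$ you have $p_v = 1 < \frac{3}{2}$, so the lemma's bound still holds there, and what fails is only the crude estimate $2(1+i) \le 3i$. That failing estimate is why the paper, and you, restrict to $i \ge 2$.
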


\begin{proof}
By definition, 
\[\mathbb{E}[Y_i]
= \sum_{v \in V_i} \mathbb{E}[X_v]
= \sum_{v \in V_i} \frac{1+\log d_v}{d_v}.
\]

For any $v \in V_i$, we have $2^{i-1} \le d_v < 2^{i}.$

Hence, $\log d_v \le \log(2^{i}) = i,$
and $\frac{1}{d_v} \le \frac{1}{2^{i-1}}.$

Therefore, 
\[
\frac{1+\log d_v}{d_v}
\le \frac{1+i}{2^{i-1}}
\le \frac{3i}{2^i}.
\]

The above relation holds for $i\ge 2$.
%for all sufficiently large $i$ (and the bound can be absorbed into constants for small $i$).

Thus, $\mathbb{E}[Y_i]
\le \sum\limits_{v \in V_i} \frac{3i}{2^i}
= \frac{3i \cdot n_i}{2^i} ,$
as required.
\end{proof}

\paragraph{\textbf{High-Probability Bound on Message Complexity:}}

This is our main technical section, where we prove that this algorithm uses $O(n \log n)$ messages with high probability.
%\begin{theorem}[Chernoff Bound {\cite{chatterjee2020complexity}}]
%Let $Z_1, Z_2, \ldots, Z_m$ be independent indicator random variables and
%let
%\[
%Z = \sum_{j=1}^{m} Z_j,
%\qquad \mu = \mathbb{E}[Z].
%\]
%Then for any $R \ge 6\mu$,
%\[
%\Pr[Z \ge R] \le 2^{-R}.
%\]
%\end{theorem}

\begin{theorem}
%There exists a constant $C > 0$ such that
For large enough $n,$
\[
\Pr\!\left[ M^{\mathrm{entire}} \ge O(n \log n) \right]
\le \frac{1}{n^4}.
\]

\end{theorem}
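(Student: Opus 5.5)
Our plan is to bound $M^{\mathrm{entire}}$ bucket by bucket, using only independence of the $X_v$ and a Chernoff bound of the form $\Pr[Z \ge R] \le 2^{-R}$ for $R \ge 6\mathbb{E}[Z]$. Within $V_i$ every degree is less than $2^i$, so the messages generated by bucket $i$ are at most $2\cdot 2^{i} Y_i$. Hence
\[
M^{\mathrm{entire}} \le 2\sum_{i=1}^{k} 2^{i} Y_i .
\]
The bucket $i=1$ consists of degree-one nodes and contributes at most $2\cdot 2 n_1 = O(n)$ deterministically. It therefore suffices to show that, with probability at least $1-1/n^4$, simultaneously for all $i\ge 2$ we have $2^i Y_i \le C\,(i\, n_i + 2^{i}\log n)$ for a suitable constant $C$. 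Summing then gives
\[
M^{\mathrm{entire}} \le O(n) + 2C\sum_{i\ge 2} i\, n_i + 2C\log n\sum_{i\le k} 2^i .
\]
Since $i\le k = O(\log n)$ and $\sum_i n_i = n$, the first sum is $O(n\log n)$. The last sum is at most $2^{k+1}\log n \le 4n\log n$. This is the point of the refinement: we keep the per-bucket additive term $2^i\log n$ rather than a uniform $\log n \cdot \polylog$ slack, and aggregate only at the end, where it telescopes to $O(n\log n)$.

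For the per-bucket bound we split into two cases according to the size of $\mu_i := \mathbb{E}[Y_i] \le 3 i n_i/2^i$ (Lemma~\ref{bound of Y_i}). Set $R_i = \max\{6\cdot 3 i n_i/2^i,\; 5\log n\}$. We have $R_i \ge 6\mu_i$, which we may assume after padding $Y_i$ with dummy independent indicators if we want $\mu_i$ exactly equal to the bound. Chernoff then gives $\Pr[Y_i \ge R_i] \le 2^{-R_i} \le n^{-5}$.

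\emph{Case 1}, $3in_i/2^i \ge \tfrac{5}{6}\log n$: the bound reads $Y_i \le 18 i n_i/2^i$, so $2^i Y_i \le 18 i n_i$.

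\emph{Case 2}, otherwise: $Y_i \le 5\log n$, so $2^iY_i \le 5\cdot 2^i \log n$.

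A union bound over the $k \le \log n + 1$ buckets gives failure probability at most $(\log n+1)n^{-5} \le n^{-4}$ for large $n$.

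The main obstacle is making the additive Case 2 term sum to $O(n\log n)$ and not to $O(n\log^2 n)$. Naively one bounds $2^i \le n$ in each of $\log n$ buckets, which is exactly what loses the extra logarithm. Our remedy is the geometric sum $\sum_{i\le k} 2^i < 2^{k+1} \le 4n$. Here we use $2^{k-1}\le n$ and the fact that all degrees are below $n$, so the nonempty buckets have $2^{i-1} \le d_v < n$. We also need to check the Chernoff form is applicable with $R_i\ge 6\mu_i$ using only the upper bound on $\mu_i$; the monotonicity of the tail in the mean handles this. Finally, we should verify that logarithms are base 2, so that $2^{-5\log n} = n^{-5}$.
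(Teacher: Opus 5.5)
Your proposal is correct and follows essentially the same route as the paper. Both arguments bound each bucket by $M_i \le 2^{i+1}Y_i$ and apply the $2^{-R}$ Chernoff tail per bucket, with a multiplicative threshold of order $i n_i/2^i$ when the mean bound is at least order $\log n$ and an additive threshold of order $\log n$ otherwise, summing the first regime via $\sum_i i n_i \le kn$ and the second via the geometric sum $\sum_{i\le k}2^{i}\le 2^{k+1}\le 4n$ before a final union bound. The only differences are cosmetic: you merge the paper's Cases II and III into the single threshold $R_i=\max\{18in_i/2^i,\,5\log n\}$ and split on the upper bound $3in_i/2^i$ rather than on $\mathbb{E}[Y_i]$, which changes only constants (and your padding remark is unnecessary, since $R_i\ge 6\mu_i$ already holds directly).
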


\begin{proof}

Recall that for each bucket $V_i$,
\[
Y_i = \sum_{v \in V_i} X_v
\quad \text{and} \quad
M_i = \sum_{v \in V_i} M_v.
\]

Since $M_v = 2 d_v X_v$, we have $M_i = \sum\limits_{v \in V_i} 2 d_v X_v.$

Recall that, for every $v \in V_i$, we have $2^{i-1} \le d_v < 2^{i}$.
Thus,
\[
M_i \le \sum_{v \in V_i} 2 \cdot 2^{i} X_v
= 2^{i+1} \sum_{v \in V_i} X_v
= 2^{i+1} Y_i.
\]

Hence, $M_i \le 2^{i+1} Y_i$. 

We have already established that
$\mathbb{E}[Y_i] \le \frac{3 i n_i}{2^i},$ for $i\ge2$ (cf. Lemma \ref{bound of Y_i}).

\medskip

Now we consider three cases based on the bucket index $i$, classified according to the expected number of candidates selected from that bucket. In all cases we will show that the messages used by all the buckets are bounded by $O(n \log n)$ with high probability. 

Here we separate the Case-I for $i=1$, because, in this proof, we use Lemma \ref{bound of Y_i} which holds for $i\ge 2$. 
\begin{description}
\item[Case I:] $i=1$

Since $2^{i-1} \le d_v < 2^i$, here $d_v=1$ and hence clearly total message used by this bucket is, 
\[
M_1 =2 d_v \cdot \sum_{v \in V_1} X_v \le2 d_v \cdot n_1 \le 2d_v\cdot n=2n
\]
as $d_v=1$ and $n_1 \le n$. 

%We separate this case from $i\ge2$ for applying the Lemma \ref{bound of Y_i}, which holds for $i\ge 2$. 

\item[Case II:] $i\ge2$ and $\mathbb{E}[Y_i] \ge \log n$
%\vspace{1 cm}

Let, $R = 6 \cdot \frac{3 i n_i}{2^i}.$

So, we have
$R \ge 6 \, \mathbb{E}[Y_i]$, since $\mathbb{E}[Y_i]\le \frac{3in_i}{2^i}$. 

Since $Y_i$ is a sum of independent indicator random variables, so we can use Chernoff bound (\cite[Theorem 4.4(3)]{lynch1996distributed14}) and hence we get,
\begin{align*}
&\Pr[ Y_i \ge R ]
\le 2^{-R}\\ 
\Rightarrow &\Pr[ Y_i \ge \frac{6 \cdot3in_i}{2^i} ] \le 2^{\frac{-6 \cdot3in_i}{2^i}}.
\end{align*}
% \[
% \Pr[ Y_i \ge R ]
% \le 2^{-R} 
% \Rightarrow \Pr[ Y_i \ge R ] \le 2^{\frac{-6 \cdot3in_i}{2^i}}.
% \]

Since, $\frac{3in_i}{2^i} \ge \mathbb{E}[Y_i] \ge \log n$ in this case,
\[
\Pr\!\left[ Y_i \ge\,\frac{ 6\cdot 3in_i}{2^i} \right]
\le2^{\frac{-6 \cdot3in_i}{2^i}}\le 2^{-6\log n}
= \frac{1}{n^6}.
\]

Now, $M_i \le 2^{i+1} Y_i$ leads to,
\[
\Pr\!\left[
M_i \ge  \frac{6 \cdot 3 i n_i}{2^{i}} \cdot 2^{i+1}
\right]
\le \frac{1}{n^6},
\]

that is, $\Pr\!\left[
M_i \ge 36 i n_i
\right]
\le \frac{1}{n^6}.$

\medskip

Let, us consider the following set $\mathcal{I} = \{ i : \mathbb{E}[Y_i] \ge \log n \}$. Applying a union bound over all such buckets, and noting that the number
of buckets is at most $\log n$, we obtain
\[
\Pr\!\left[
\sum_{i \in \mathcal{I}} M_i
\ge
\sum_{i \in \mathcal{I}} 36 i n_i
\right]
\le
\frac{\log n}{n^6}
< \frac{1}{n^5}.
\]

Observe that, 
\[
\sum\limits_{i \in \mathcal{I}} i n_i
\le
\sum\limits _{i = 2}^k i n_i \le \sum\limits _{i = 2}^k \log n \cdot n_i \le \log n \sum\limits _{i = 2}^k n_i \le n \log n, 
\]
as we have, $\sum\limits_{i\ge1}n_i=n$ and, $2^{k-1} \le n < 2^k$.

We conclude, $\Pr\!\left[
\sum\limits_{i \in \mathcal{I}} M_i
\ge 36 n \log n
\right]
\le \frac{1}{n^5}.$

\medskip

Let, $M' = \sum\limits_{i \in \mathcal{I}} M_i.$ Then, $\Pr\!\left[ M' \ge 36 n \log n \right]
\le \frac{1}{n^5}.$

\item[Case III:] $i\ge2$ and $\mathbb{E}[Y_i] < \log n$

Let $R = 6 \log n.$

Since $\mathbb{E}[Y_i] < \log n$, we have, $R \ge 6 \mathbb{E}[Y_i].$

Applying the Chernoff bound ,
\[
\Pr\!\left[ Y_i \ge 6 \log n \right]
\le 2^{-6\log n}
= \frac{1}{n^6}.
\]

Recall that $M_i \le 2^{i+1} Y_i$. Therefore,
\[
\Pr\!\left[ M_i \ge 6 \cdot 2^{i+1} \log n \right]
\le \frac{1}{n^6}.
\]

\medskip
Consider the following set $\mathcal{J} = \{ i : \mathbb{E}[Y_i] < \log n \}$ and denote $M'' = \sum\limits_{i \in \mathcal{J}} M_i.$

Using a union bound over all such buckets (at most $\log n$ buckets),
we obtain
\[
\Pr\!\left[
M'' \ge
6 \log n \sum_{i \in \mathcal{J}} 2^{i+1}
\right]
\le
\frac{\log n}{n^6}
\le \frac{1}{n^5}.
\]

\medskip

Now, observe that the bucket indices satisfy the following inequalities: $1 \le i \le k$, and $2^{k-1} < n \le 2^k$.

Hence, $\sum\limits_{i=1}^{k} 2^{i+1}
\le 2^{k+2}
\le 8n.$ 

Therefore, $\Pr\!\left[
M'' \ge 48 n \log n
\right]
\le \frac{1}{n^5}.$

\end{description}
%\medskip

Combining all cases, we have the total message complexity as follows.
\[
M' = \sum_{i \in \mathcal{I}} M_i,
\qquad
M'' = \sum_{i \in \mathcal{J}} M_i,
\qquad
M^{\mathrm{entire}} = M_1+ M' + M''.
\]

From Case I, II, III, applying a union bound over these three events,
\[
\Pr\!\left[
M^{\mathrm{entire}} \ge 85 n \log n
\right]
\le \frac{3}{n^5}
\le\frac{1}{n^4},
\]
for large enough $n$.

Hence we conclude, $$\Pr\!\left[ M^{\mathrm{entire}} \ge O(n \log n) \right]
\le \frac{1}{n^4},$$

which completes the prove of Theorem \ref{main:theorem}.
\end{proof}

\section{Conclusion} 

We revisited the randomized leader election algorithm of Chatterjee et al.~\cite{chatterjee2020complexity} for diameter-two networks and provided a tighter analysis of its message complexity from $O(n \log^3 n)$ to $O(n \log n)$ with high probability. Since the known lower bound is $\Omega(n)$, this narrows the gap between upper and lower bounds from a $\log^3 n$ factor to a single $\log n$ factor. Closing this remaining $O(n)$ versus $O(n \log n)$ gap remains a challenging open problem.

%\textcolor{red}{Better to add a .bib file where you add all the references. Already there exists a .bib file, modify these below references and paste it there, then cite in the paper}

%clarify the compression of gaps and the remaining gaps.
%%
%% The acknowledgments section is defined using the "acks" environment
%% (and NOT an unnumbered section). This ensures the proper
%% identification of the section in the article metadata, and the
%% consistent spelling of the heading.

%%
%% The next two lines define the bibliography style to be used, and
%% the bibliography file.
\bibliographystyle{ACM-Reference-Format}
\bibliography{references.bib}
%\begin{thebibliography}{9}

% \bibitem{KuttenUniversal}
% S. Kutten, G. Pandurangan, D. Peleg, P. Robinson, and A. Trehan.
% \newblock On the complexity of universal leader election.
% \newblock {\em Journal of the ACM}, 62(1):7:1--7:27, 2015.
% \newblock DOI: 10.1145/2699440.

% \bibitem{KuttenSublinear}
% S. Kutten, G. Pandurangan, D. Peleg, P. Robinson, and A. Trehan.
% \newblock Sublinear bounds for randomized leader election.
% \newblock {\em Theoretical Computer Science}, 561(Part B):134--143, 2015.
% \newblock DOI: 10.1016/j.tcs.2014.02.009.

% \bibitem{ChatterjeeLeaderElectionDiameterTwo}
% S. Chatterjee, G. Pandurangan, and P. Robinson.
% \newblock The complexity of leader election in diameter-two networks.
% \newblock {\em Distributed Computing}, 33:189--205, 2020.
% \newblock DOI: 10.1007/s00446-019-00354-2.

% \end{thebibliography}

%%
\end{document}